\documentclass[lettersize,journal]{IEEEtran}
\usepackage{amsmath,amsfonts}
\usepackage{algorithm}
\usepackage{algpseudocode} 
\usepackage{algorithm}
\usepackage{array}
\usepackage{makecell}
\usepackage{booktabs}
\usepackage[caption=false,font=footnotesize,labelfont=sf,textfont=sf]{subfig}
\usepackage{textcomp}
\usepackage{url}
\usepackage{verbatim}
\usepackage{graphicx}
\usepackage{cite}

\newtheorem{theorem}{Theorem}
\newtheorem{lemma}{Lemma}
\newtheorem{definition}{Definition}
\begin{document}

\title{A two-steps tensor eigenvector centrality for nodes and hyperedges in hypergraphs}

\author{Qing~Xu,
        Chunmeng~Liu,
        Changjiang~Bu,
        and~Jihong~Shen
\thanks{Qing Xu, Changjiang Bu, and Jihong Shen are with the School of Mathematical Sciences, Harbin Engineering University, Harbin 150001, PR China (e-mail: qinxu191@163.com; buchangjiang@hrbeu.edu.cn; shenjihong@hrbeu.edu.cn).}
\thanks{Chunmeng Liu is with the Academy for Advanced Interdisciplinary Studies, Northeast Normal University, Changchun 130024, PR China (e-mail: liuchunmeng0214@126.com).}
\thanks{Corresponding author: Chunmeng Liu.}
}

\markboth{Journal of \LaTeX\ Class Files,~Vol.~XX, No.~X, August~202X}%
{Xu \MakeLowercase{\textit{et al.}}: Title of Your Paper}


\markboth{Journal of \LaTeX\ Class Files,~Vol.~14, No.~8, August~2021}%
{Shell \MakeLowercase{\textit{et al.}}: A Sample Article Using IEEEtran.cls for IEEE Journals}


\maketitle

\begin{abstract}
Hypergraphs have been a powerful tool to represent higher-order interactions, where hyperedges can connect an arbitrary number of nodes. Quantifying the relative importance of nodes and hyperedges in hypergraphs is a fundamental problem in network analysis. In this paper, we propose a new tensor-based centrality measure for general hypergraphs. We use a third-order tensor to represent the relationship between nodes and hyperedges. The tensor's positive Perron vector is defined as the centrality vector of the hypergraph. The existence and uniqueness of this centrality vector are guaranteed by the Perron-Frobenius theorem for tensors. This new centrality measure captures a higher-order mutual reinforcement mechanism: a node's importance is determined by the importance of its incident hyperedges and the other nodes within these hyperedges; symmetrically, a hyperedge's importance is determined by the importance of its constituent nodes and the other hyperedges containing these nodes. We further provide a combinatorial interpretation by proving that the centrality vector represents the limit geometric capacity of two-steps expansion trees. We illustrate the centrality measure on real-world hypergraph datasets.\end{abstract}

\begin{IEEEkeywords}
Hypergraph, Tensor, Perron vector, Centrality.

\end{IEEEkeywords}

\section{Introduction}
\IEEEPARstart{M}{odeling} complex systems often requires capturing interactions that go beyond simple pairwise connections. Hypergraphs generalize the concept of simple graphs and have emerged as a powerful tool to represent such higher-order interactions. Let $H = (\mathcal{V}, \mathcal{E})$ be a hypergraph, where $\mathcal{V}$ is the set of nodes, and $\mathcal{E}$ is the set of hyperedges. A hyperedge $e \in \mathcal{E}$ can contain an arbitrary number of nodes, while an edge in a graph can only connect two nodes. If each hyperedge contains exactly $k$ nodes, then $H$ is referred to as a $k$-uniform hypergraph; otherwise, $H$ is called non-uniform \cite{Berge_1984}. Given a hypergraph, a fundamental issue is evaluating the centrality or relative importance of individual nodes or hyperedges. In the analysis of standard graphs, this issue has been extensively studied using various centrality measures, including degree, closeness, betweenness, Katz, subgraph, and eigenvector centrality \cite{Lu_2016,Ernesto_2005,Ernesto_2006,Bonacich_1987,Bonacich_2007}. Building on these established foundations, some centrality measures have been generalized to hypergraphs to quantify the significance of nodes and hyperedges \cite{Lu_2025}.

Among these measures, eigenvector centrality is an important centrality measure with widespread applications in social network analysis \cite{Fan_2020} and web search ranking \cite{Gleich_2015}. Inspired by eigenvector centrality in simple graphs, researchers have extended it to hypergraphs. In 2019, Benson extended eigenvector centrality to $k$-uniform hypergraphs \cite{Benson_2019}. This approach is analogous to the graph case: just as graph eigenvector centrality corresponds to the Perron vector of the adjacency matrix, uniform hypergraph centrality is derived from the Perron vector of the adjacency tensor. However,  real-world hypergraphs are often non-uniform. For non-uniform hypergraphs, Tudisco and Higham formulated hypergraph centrality as a general constrained nonlinear eigenvalue problem, defining a class of spectral measures to identify important nodes and hyperedges in hypergraphs \cite{Tudisco_2021}. For more research on hypergraph centrality, please refer to Refs. \cite{Ernesto_2006, Kovalenko_2022,xie2023vital,Clark_2023}.

In this paper, we propose a tensor-based centrality measure for general hypergraphs. We first transform the hypergraph into its incidence bipartite graph and construct a third-order tensor to encode the two-steps walks, which was originally proposed in Ref. \cite{Xu_2023}. Specifically, the tensor entry $a_{ijk}=1$ indicates a walk $i \sim j \sim k$ in the incidence bipartite graph, representing either two nodes $i, k$ sharing a hyperedge $j$, or two hyperedges $i, k$ intersecting at a node $j$. Assuming that the importance of a node or hyperedge is proportional to the joint influence of the connecting hyperedge and the neighbor (quantified as the product of their scores), we derive the centrality vector as the unique positive eigenvector corresponding to the spectral radius of the tensor. The existence and uniqueness of this positive vector are guaranteed by the Perron-Frobenius theorem for tensors. We use an iterative algorithm based on the power method for nonnegative tensors to compute the centrality vector efficiently \cite{Ng_2010,Liu_2010,Zhou_2013}. Furthermore, we interpret the centrality using two-steps expansion trees. We define a recursive metric called geometric capacity on these trees and prove that the centrality vector is exactly the normalized limit of these capacities as the tree depth approaches infinity.

The remainder of this paper is organized as follows. Section 2 introduces definitions and fundamental lemmas related to tensors. Section 3 details the proposed tensor-based centrality measure and provides an algorithm to compute the centrality vector. Section 4 presents numerical experiments on real-world datasets to illustrate the effectiveness of the centrality measure. Finally, Section 5 gives the conclusions of the paper.

\section{Preliminaries}
In this section, we review fundamental concepts regarding tensors and their spectral properties, which serve as the mathematical foundation for our proposed method.

Let $\mathbb{C}^{n}$ denote the set of $n$-dimensional vectors over the complex field $\mathbb{C}$.
 We consider an $m$-order $n$-dimensional tensor $\mathcal{A} = (a_{i_{1}i_{2}\cdots i_{m}})$. The tensor $\mathcal{A}$ is called a \emph{nonnegative tensor} if all its entries are nonnegative. For a vector $\mathbf{x} = (x_{1}, \ldots, x_{n})^{\mathrm{T}} \in \mathbb{C}^{n}$, the product $\mathcal{A}\mathbf{x}^{m-1}$ is a vector in $\mathbb{C}^{n}$ whose $i$-th component is given by $\sum_{i_{2},\ldots,i_{m}=1}^{n}a_{ii_{2}\cdots i_{m}}x_{i_{2}}\cdots x_{i_{m}}$.
A complex number $\lambda$ is called an \emph{eigenvalue} of $\mathcal{A}$ if there exists a nonzero vector $\mathbf{x} \in \mathbb{C}^{n}$ such that
\begin{equation*}
  \mathcal{A} \mathbf{x}^{m-1} = \lambda \mathbf{x}^{[m-1]},
\end{equation*}
where $\mathbf{x}^{[m-1]} = (x_1^{m-1}, \ldots, x_n^{m-1})^{\mathrm{T}}$. The vector $\mathbf{x}$ is then called an \emph{eigenvector} of $\mathcal{A}$ corresponding to $\lambda$ \cite{Qi_2005,Lim_2005}.
The spectral radius of $\mathcal{A}$ is the largest modulus of its eigenvalues, denoted by $\rho(\mathcal{A})$. These spectral concepts have been widely applied in spectral graph theory, spectral hypergraph theory and network analysis, see Refs. \cite{Liu_2023,app_4,app_7,app_8,Liu_2023_2,Liu_2024,Liu_2026,Tudisco_2021,Clark_2023,Xu_2023,Wu_2019,Zhang_2024}.

A crucial property for the existence of a unique positive eigenvector is the tensor's weakly irreducibility, which can be characterized via the tensor's representative matrix. 
\begin{definition}{\rm \cite{Qi_2017}}
Let $\mathcal{A}=(a_{i_{1}i_{2}\cdots i_{m}})$ be an $m$-th order $n$-dimensional tensor. The \textit{representative matrix} of $\mathcal{A}$, denoted by $M(\mathcal{A}) = (m_{ij})$, is defined by:
\begin{equation*}
    m_{ij} = \sum_{j \in \{i_2, \dots, i_m\}} |a_{i i_2 \cdots i_m}|.
\end{equation*}
\end{definition}

The tensor $\mathcal{A}$ is said to be \emph{weakly irreducible} if and only if its representative matrix $M$ is irreducible \cite{Qi_2017}. A weakly irreducible nonnegative
tensor is called a \textit{weakly primitive} tensor if its representative matrix is primitive \cite{Qi_2017}.
For weakly irreducible nonnegative tensors, the following Perron-Frobenius theorem holds.

\begin{lemma}{\rm \cite{Perron Th}} \label{thm Perron Th.}
If $\mathcal{A}$ is a nonnegative weakly irreducible tensor, then its spectral radius $\rho(\mathcal{A})$ is the unique positive eigenvalue of $\mathcal{A}$, and there exists a unique positive eigenvector $\mathbf{x}$ (up to a positive scaling factor) corresponding to $\rho(\mathcal{A})$.
\end{lemma}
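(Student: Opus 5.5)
The plan is to reduce the statement to a fixed-point argument on the simplex, in the spirit of the Chang--Pearson--Zhang proof, replacing irreducibility by weak irreducibility.

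\textbf{Existence.} Let $\Delta=\{\mathbf{x}\ge 0:\sum_i x_i=1\}$. Weak irreducibility of $\mathcal{A}$ means $M(\mathcal{A})$ is irreducible, so no row of $M(\mathcal{A})$ vanishes. Hence $\mathcal{A}\mathbf{x}^{m-1}$ has no zero component whenever $\mathbf{x}>0$.

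To handle boundary points, first consider the perturbed tensor $\mathcal{A}_\varepsilon=\mathcal{A}+\varepsilon\mathcal{J}$, where $\mathcal{J}$ is the all-ones tensor. Then $\mathcal{A}_\varepsilon\mathbf{x}^{m-1}>0$ on $\Delta$. The map
\[
T(\mathbf{x})=\frac{(\mathcal{A}_\varepsilon\mathbf{x}^{m-1})^{[1/(m-1)]}}{\sum_i(\mathcal{A}_\varepsilon\mathbf{x}^{m-1})_i^{1/(m-1)}}
\]
is a continuous self-map of $\Delta$. Brouwer's theorem gives a fixed point $\mathbf{x}_\varepsilon>0$ with eigenvalue $\lambda_\varepsilon>0$.

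Let $\varepsilon\to 0$ along a subsequence, so that $\mathbf{x}_\varepsilon\to\mathbf{x}\in\Delta$ and $\lambda_\varepsilon\to\lambda\ge 0$. The limit satisfies $\mathcal{A}\mathbf{x}^{m-1}=\lambda\mathbf{x}^{[m-1]}$.

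Positivity of $\mathbf{x}$ is the key step, and I expect it to be the main obstacle. Suppose the support $S$ of $\mathbf{x}$ is a proper subset of the index set. For $i\notin S$, every term $a_{ii_2\cdots i_m}x_{i_2}\cdots x_{i_m}$ must vanish. I would then argue that $a_{ii_2\cdots i_m}=0$ whenever $i\notin S$ and all of $i_2,\dots,i_m$ lie in $S$.

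This gives only ``all indices in $S$'', whereas weak irreducibility speaks about a single index $j\in S$ appearing among the $i_k$. Closing this gap is the delicate point. The first thing I would try is the Friedland--Gaubert--Han device: work with the homogeneous, order-preserving map $F(\mathbf{x})=(\mathcal{A}\mathbf{x}^{m-1})^{[1/(m-1)]}$. Weak irreducibility is exactly strong connectivity of the graph of $M(\mathcal{A})$, and by Gaubert--Gunawardena this strong connectivity gives the existence of a positive eigenvector of $F$ directly. I would take that as the route rather than patching the perturbation argument.

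\textbf{$\lambda=\rho(\mathcal{A})$ and uniqueness of the positive eigenvalue.} Let $\mathbf{x}>0$ be an eigenvector with eigenvalue $\lambda$. Let $\mu$ be any eigenvalue with eigenvector $\mathbf{y}$, and put $t=\max_i|y_i|/x_i$, attained at some index $i$. Then
\[
|\mu||y_i|^{m-1}\le(\mathcal{A}|\mathbf{y}|^{m-1})_i\le t^{m-1}(\mathcal{A}\mathbf{x}^{m-1})_i=\lambda t^{m-1}x_i^{m-1},
\]
so $|\mu|\le\lambda$. Therefore $\lambda=\rho(\mathcal{A})$.

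Next, suppose $\mathbf{z}>0$ is another positive eigenvector with eigenvalue $\nu$. Applying the Collatz--Wielandt min/max ratio bounds in both directions gives $\nu=\lambda$.

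\textbf{Uniqueness of the positive eigenvector up to scaling.} Take two positive eigenvectors $\mathbf{x},\mathbf{z}$ for $\rho$. Let $s=\min_i z_i/x_i$ and set $I=\{i: z_i=sx_i\}$.

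For $i\in I$, compare the eigen-equations for $\mathbf{z}$ and $s\mathbf{x}$ at index $i$:
\[
\rho\, z_i^{m-1}=(\mathcal{A}\mathbf{z}^{m-1})_i\ge s^{m-1}(\mathcal{A}\mathbf{x}^{m-1})_i=\rho\, s^{m-1}x_i^{m-1}=\rho\, z_i^{m-1}.
\]
So equality holds throughout. This forces $z_{i_k}=sx_{i_k}$ at every index appearing in a nonzero entry $a_{ii_2\cdots i_m}$. By the definition of $M(\mathcal{A})$, this says: if $i\in I$ and $m_{ij}>0$, then $j\in I$.

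Irreducibility of $M(\mathcal{A})$ then propagates membership in $I$ to all indices. Hence $\mathbf{z}=s\mathbf{x}$, which proves uniqueness.
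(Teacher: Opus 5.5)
Your proof is correct, and on its one genuinely hard step it is the same proof as the source the paper relies on. The paper gives no argument for this lemma. It imports the Perron--Frobenius theorem for weakly irreducible tensors (Friedland--Gaubert--Han), where existence is obtained exactly as you end up doing it. The map $F(\mathbf{x})=(\mathcal{A}\mathbf{x}^{m-1})^{[1/(m-1)]}$ is homogeneous and order preserving, and it maps the open orthant into itself because $M(\mathcal{A})$ has no zero rows. Its Gaubert--Gunawardena graph has an arc $i\to j$ iff $F_i(\mathbf{x})\to\infty$ as $x_j\to\infty$ with the other coordinates fixed, which is precisely the off-diagonal pattern of $M(\mathcal{A})$; strong connectivity then yields a positive eigenvector.

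Your diagnosis of the perturbation detour is right, and that part should simply be deleted. A support argument alone cannot give positivity under weak irreducibility, because nonnegative eigenvectors with zero entries really occur. For instance, take $n=2$, $m=3$, $a_{112}=a_{211}=1$ and all other entries zero; then $(0,1)^{\top}$ is an eigenvector for the eigenvalue $0$. So any such argument would have to use $\lambda=\rho(\mathcal{A})$ for the limit, and that needs what you are trying to prove.

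You depart from the cited source in the remaining two steps. There, uniqueness and the spectral-radius identification come from the general theory of homogeneous order-preserving maps, via irreducibility of the derivative of $F$ at the positive eigenvector. That framework also delivers geometric convergence of the power method for weakly primitive tensors, which is what the paper actually uses in Theorem~\ref{thm4.3}.

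Your max-ratio bound $|\mu|\le\lambda$ and the min-ratio propagation along arcs of $M(\mathcal{A})$ are more elementary and fully self-contained. They use only that a sum of nonnegative monomials equals a termwise smaller sum exactly when every factor agrees. The extra Collatz--Wielandt step for $\nu=\lambda$ is redundant, since your bound applied to $\mathbf{z}$ already gives $\nu=\rho(\mathcal{A})$.

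Finally, what you prove is the correct form of the statement: $\rho(\mathcal{A})$ is the unique eigenvalue admitting a positive eigenvector. Read literally, ``unique positive eigenvalue'' is false already for $m=2$, since the matrix with rows $(2,1)$ and $(1,2)$ has positive eigenvalues $3$ and $1$.
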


Our work builds upon the concept of the two-steps tensor, originally introduced in Ref. \cite{Xu_2023} as follows.
\begin{definition}{\rm \cite{Xu_2023}}
Let $G = (V,E)$ be a simple graph with $n$ nodes.
The two-steps tensor $\mathcal{A}(G) = (a_{ijk})$ is a third-order $n$-dimension tensor defined by
\[
a_{ijk} =
\begin{cases}
1, & \text{\rm{if} $\{i,j\}\in E,\{j,k\} \in E$,} \\
0, & \text{\rm{otherwise}}.
\end{cases}
\]
The spectral radius $\rho(G)$ is the largest modulus among all eigenvalues of $\mathcal{A}(G)$.
\end{definition}

The connectivity of the graph ensures the weakly irreducibility of this tensor, as stated in the following lemma.
\begin{lemma}\textup{\cite{Xu_2023}}\label{lemma2.4}
  Let $G = (V,E)$ be a simple undirected connected graph. Then the two-steps tensor associated to $G$ is nonnegative weakly irreducible. 
\end{lemma}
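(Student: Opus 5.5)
To prove Lemma~\ref{lemma2.4}, I will work directly from the characterization of weak irreducibility recalled above: the two-steps tensor $\mathcal{A}(G)$ is weakly irreducible exactly when its representative matrix $M(\mathcal{A}(G))$ is irreducible. Nonnegativity is immediate, since every entry $a_{ijk}$ is $0$ or $1$. So the whole task is to show that $M=(m_{ij})$ is irreducible. For a nonnegative matrix this is equivalent to strong connectivity of its associated digraph $D(M)$, which has an arc $i\to j$ whenever $m_{ij}>0$.

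The key step is to relate the arcs of $D(M)$ to the edges of $G$. By definition, $m_{ij}=\sum_{j\in\{i_2,i_3\}} a_{i i_2 i_3}$, so $m_{ij}>0$ as soon as there is some pair $(i_2,i_3)$ containing $j$ with $a_{ii_2i_3}=1$. I will show that every edge $\{i,j\}\in E$ produces arcs $i\to j$ and $j\to i$ in $D(M)$. Take the pair $(i_2,i_3)=(j,i)$. Then $a_{iji}=1$, because $\{i,j\}\in E$ and $\{j,i\}\in E$. This term contributes to $m_{ij}$ because $j\in\{j,i\}$, so $m_{ij}\ge 1$. By the symmetry of the edge, $a_{jij}=1$, and the same argument gives $m_{ji}\ge 1$. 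Hence the underlying undirected graph of $D(M)$ contains $G$, with each edge present in both directions.

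Since $G$ is connected, any two nodes $u$ and $v$ are joined by a path $u=v_0,v_1,\dots,v_r=v$ in $G$. By the previous step this path is also a directed walk $v_0\to v_1\to\cdots\to v_r$ in $D(M)$. Therefore $D(M)$ is strongly connected, so $M$ is irreducible, and $\mathcal{A}(G)$ is weakly irreducible.

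One degenerate case needs separate treatment. If $n=1$ there are no edges and $\mathcal{A}(G)$ is the zero $1\times1\times1$ tensor. I will either note that this is excluded, as is usual for connected graphs with at least one edge, or adopt the convention that a $1\times 1$ matrix counts as irreducible.

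The main point requiring care is the indexing in the representative matrix. The summation condition concerns membership of $j$ in the set $\{i_2,\dots,i_m\}$, not its position, so the witnessing walk $i\sim j\sim i$ must be chosen with $j$ among the last indices. The choice $(i_2,i_3)=(j,i)$ does this. Apart from that, the argument is a direct translation of graph connectivity into strong connectivity of $D(M)$.
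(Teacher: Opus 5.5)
Your proof is correct. The paper gives no proof of this lemma (it is quoted from \cite{Xu_2023}), but your argument is the standard one and matches the device the paper uses for Lemma~\ref{lemma2.5}: the backtracking walk $i\sim j\sim i$ gives $a_{iji}=1$, hence $m_{ij},m_{ji}\ge 1$ for every edge of $G$, and connectivity of $G$ then makes the digraph of $M(\mathcal{A})$ strongly connected, so your separate handling of the one-vertex case, where $M(\mathcal{A})=(0)$ and the conclusion depends on convention, is the only extra care needed.
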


In this paper, we strengthen Lemma \ref{lemma2.4} by demonstrating that the tensors are weakly  primitive.
\begin{lemma}\label{lemma2.5}
Let $G = (V,E)$ be a simple undirected connected graph. The associated two-steps tensor $\mathcal{A} = (a_{ijk})$ is weakly primitive.
\end{lemma}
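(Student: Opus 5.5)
The plan is to compute the support of the representative matrix $M(\mathcal{A})=(m_{ij})$ directly from the definition of the two-steps tensor. Then we show that $M(\mathcal{A})$ is irreducible and has a strictly positive diagonal. The classical fact that an irreducible nonnegative matrix with positive diagonal is primitive then finishes the argument. Throughout I would assume $G$ has at least two nodes (so every node has a neighbour by connectivity). For $n=1$ the tensor is zero and the statement is vacuous or false, which should be noted.

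First, I would unpack the definition. For a third-order tensor, $m_{ij}=\sum_{j\in\{i_2,i_3\}} a_{i i_2 i_3}$, so $m_{ij}\ge a_{i j k}$ and $m_{ij}\ge a_{i k j}$ for every $k$. Now take an edge $\{i,j\}\in E$. Since $\{j,i\}\in E$ as well, we get $a_{iji}=1$, and hence $m_{ij}\ge 1$. Thus every edge of $G$ gives a positive off-diagonal entry of $M(\mathcal{A})$ in both directions. Irreducibility of $M(\mathcal{A})$ then follows from connectivity of $G$, which is exactly Lemma \ref{lemma2.4}; I would either cite it or repeat this one-line observation.

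Second, I would show $m_{ii}>0$ for every node $i$. Pick any neighbour $j$ of $i$, which exists because $G$ is connected with $n\ge 2$. Then $\{i,j\}\in E$ and $\{j,i\}\in E$, so $a_{iji}=1$. Since the index $i$ appears among $\{i_2,i_3\}=\{j,i\}$, this term contributes to $m_{ii}$, giving $m_{ii}\ge 1$.

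Finally, I would invoke (or quickly prove) the standard lemma: if $M\ge 0$ is irreducible with all diagonal entries positive, then $M^{n-1}>0$, so $M$ is primitive. The short proof goes as follows. Irreducibility gives, for each pair $(i,k)$, a walk from $i$ to $k$ in the digraph of $M$ of length at most $n-1$. The positive diagonal entries supply loops that pad this walk to length exactly $n-1$. Hence $(M^{n-1})_{ik}>0$.

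There is no serious obstacle here. The only delicate point is reading the index convention in the definition of the representative matrix correctly, in particular that the diagonal entry $m_{ii}$ picks up the term $a_{iji}$, together with the degenerate single-node case.
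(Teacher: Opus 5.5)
Your proposal is correct and follows essentially the same route as the paper: irreducibility of $M(\mathcal{A})$ via Lemma~\ref{lemma2.4}, a positive diagonal from the backtracking walk $i\sim j\sim i$ (so $a_{iji}=1$ contributes to $m_{ii}$), and then the classical fact that an irreducible nonnegative matrix with positive diagonal is primitive. Your caveat that $n\ge 2$ is needed is fair; the paper relies on the same assumption implicitly when it says the graph has no isolated nodes.
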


\begin{IEEEproof}
By Lemma \ref{lemma2.4}, the tensor $\mathcal{A}$ is weakly irreducible, which implies that its representative matrix $M(\mathcal{A})$ is irreducible. To prove that $\mathcal{A}$ is weakly primitive, it suffices to show that $M(\mathcal{A})$ is primitive.

According to the definition of the representative matrix, the diagonal entry is given by $m_{ii} = \sum_{k} (a_{iki} + a_{iik})$. Since the graph contains no isolated nodes, every node $i$ has at least one neighbor $k$. This guarantees the existence of the backtracking walk $i \sim k \sim i$, which implies $a_{iki} = 1$. Consequently, for each $i$, the diagonal entries satisfy:
\begin{equation*}
    m_{ii} \ge a_{iki} = 1 > 0, 
\end{equation*}

It is a classical result in matrix theory that an irreducible nonnegative matrix with a strictly positive diagonal is primitive. Since $M(\mathcal{A})$ satisfies these conditions, it is primitive. Therefore, the tensor $\mathcal{A}$ is weakly primitive.
\end{IEEEproof}

\section{The two-steps eigenvector centrality of general hypergraphs}
\subsection{The definition of hypergraph two-steps eigenvector centrality (HTEC)}
In this section, we propose a novel tensor-based centrality measure, the hypergraph two-steps eigenvector centrality (HTEC).

A hypergraph $H = (\mathcal{V}, \mathcal{E})$ can be represented by its incidence bipartite graph, denoted as $B(H)$. The node set of $B(H)$ is constructed as the union of the original nodes and the hyperedges. In this structure, an undirected edge connects a node to a hyperedge if and only if the node is a member of that hyperedge, as illustrated in Fig. 1. If $H$ is connected, then $B(H)$ is also connected. This structural property allows us to directly apply the proposed two-steps tensor model to define centrality in hypergraphs. We define the hypergraph centrality measure as follows, and refer to it as the hypergraph two-steps eigenvector centrality (HTEC). 

\begin{figure}[!t]
\centering
\includegraphics[width=\columnwidth]{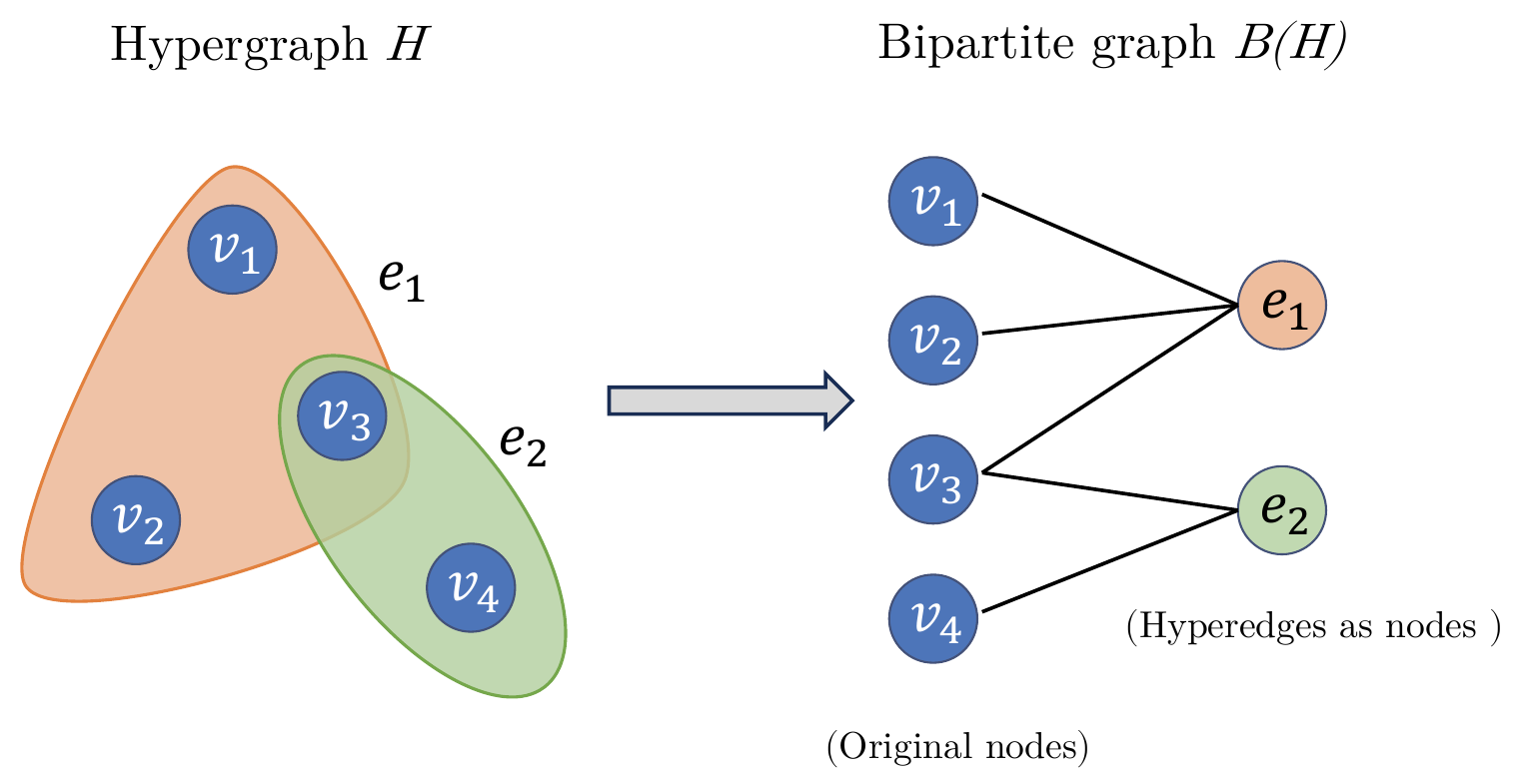}
\caption{A schematic diagram of converting a hypergraph to a bipartite graph.}
\label{hyper2bi}
\end{figure}

\begin{definition}
Let $H = (\mathcal{V}, \mathcal{E})$ be a connected hypergraph. Let $B(H) = (V_B, E_B)$ denote the incidence bipartite graph of $H$ with the node set $V_B = \mathcal{V} \cup \mathcal{E}$ of size $n = |\mathcal{V}| + |\mathcal{E}|$. Let $\mathcal{A}(B)$ be the two-steps tensor associated with $B(H)$.
The hypergraph two-steps centrality vector $\mathbf{x} \in \mathbb{R}^{n}$ is defined as the unique positive eigenvector of $\mathcal{A}$, satisfying the equation:
\begin{equation}\label{eq:vector_form}
\mathcal{A}(B)\mathbf{x}^2 = \rho(B) \mathbf{x}^{[2]},
\end{equation}
where $\rho(B)$ is the spectral radius of $\mathcal{A}(B)$.
\end{definition}

To interpret the physical meaning of this measure, we express Eq.\eqref{eq:vector_form} in its element-wise form. For any node $i \in V_B$ (representing either an original node or a hyperedge), the centrality score satisfies:
\begin{equation}\label{eq:hyper_tensor_eq}
\rho(B) x_i^2 = \sum_{j=1}^n \sum_{k=1}^n a_{ijk} x_j x_k.
\end{equation}
For the incidence graph $B(H)$, a non-zero tensor entry $a_{ijk}=1$ corresponds to a two-steps walk $i \sim j \sim k$. Since the node set of $B(H)$ consists of both original nodes $\mathcal{V}$ and hyperedges $\mathcal{E}$, the interpretation of this walk depends on whether the node $i$ represents an original node or a hyperedge. 

\textbf{Case 1:} If $i$ is a node $v \in \mathcal{V}$, the two-steps walk takes the form $v \sim e \sim u$, where $e \in \mathcal{E}$ is a hyperedge and $u \in \mathcal{V}$ is a node. Here, $a_{veu}=1$ represents a co-membership relationship through a shared hyperedge $e$. By substituting $a_{veu}=1$ into Eq.\eqref{eq:hyper_tensor_eq}, we see that the influence of node $i$ is quantified by the product of the scores of the hyperedge and the neighbor ($x_e x_u$). Consequently, a node $v$ achieves high centrality not merely by belonging to many hyperedges, but by sharing influential hyperedges ($x_e$ is high) with other influential nodes ($x_u$ is high). For example, in a scientific collaboration network, a researcher becomes central not just by publishing in top journals (high hyperedge quality), but by collaborating with distinguished  scholars (high node quality). 

\textbf{Case 2:} Symmetrically, if $i$ is a hyperedge $e \in \mathcal{E}$, the two-steps walk takes the form $e \sim v \sim f$, where $v \in \mathcal{V}$ is a common node and $f \in \mathcal{E}$ is a hyperedge. In this situation, a hyperedge acquires high centrality if it shares nodes with other influential hyperedges. For instance, a research team is ranked highly if its members are also collaborating with other prestigious teams.

It is natural to compare our method with the standard eigenvector centrality applied to the incidence bipartite graph $B(H)$. In element-wise form, the standard eigenvector centrality of a node $i$ is determined by the linear sum of its neighbors' scores. In the context of the hypergraph, this implies that a node's importance is the sum of the importance of the hyperedges it belongs to, and a hyperedge's importance is the sum of the importance of its constituent nodes. In contrast, our proposed model $\rho x_i^2 = \sum_{j,k=1}^n a_{ijk} x_j x_k$ relies on a product term, where the contribution of each two-steps walk is determined by the product of the hyperedge score and the neighbor score. Therefore, unlike the standard eigenvector centrality model, which emphasizes the quantity of the neighbors, the proposed model emphasizes the joint influence of the connecting hyperedges and the neighbors.

\subsection{Combinatorial interpretation: geometric capacity of two-steps expansion trees}
It is a classical result in spectral graph theory that standard eigenvector centrality is related to the limit distribution of walk counts. Analogously, in this part, we provide a combinatorial interpretation for the proposed HTEC measure. Specifically, we demonstrate that HTEC characterizes the limit geometric capacity of two-steps expansion trees.

We begin by revisiting the combinatorial interpretation of standard eigenvector centrality and then extend it to our method. We define the \textit{linear expansion tree} and its associated \textit{linear capacity} as they are important for the following analysis.

\begin{definition}
Let $G$ be a simple graph with adjacency matrix $A = (a_{ij})$. For any node $i$, the \textit{linear expansion tree} of depth $t$, denoted $\mathcal{L}_t(i)$, is defined recursively:
\begin{enumerate}
\item Base case ($t=0$): $\mathcal{L}_0(i)$ consists of a single node $i$ (the root).
\item Recursive step ($t > 0$): The root of $\mathcal{L}_t(i)$ is $i$. For every neighbor $j$ connected to $i$ (i.e., every walk $i \sim j$) in $G$, the root has one sub-tree as a child: the tree $\mathcal{L}_{t-1}(j)$ rooted at node $j$ with depth $t-1$.
\end{enumerate}
\end{definition}

A schematic illustration of this linear expansion tree is provided in the middle panel of Fig. \ref{expansion}. Based on this recursive structure, we define the linear capacity of the expansion tree.

\begin{figure*}[!t]
\centering
\includegraphics[width=\textwidth]{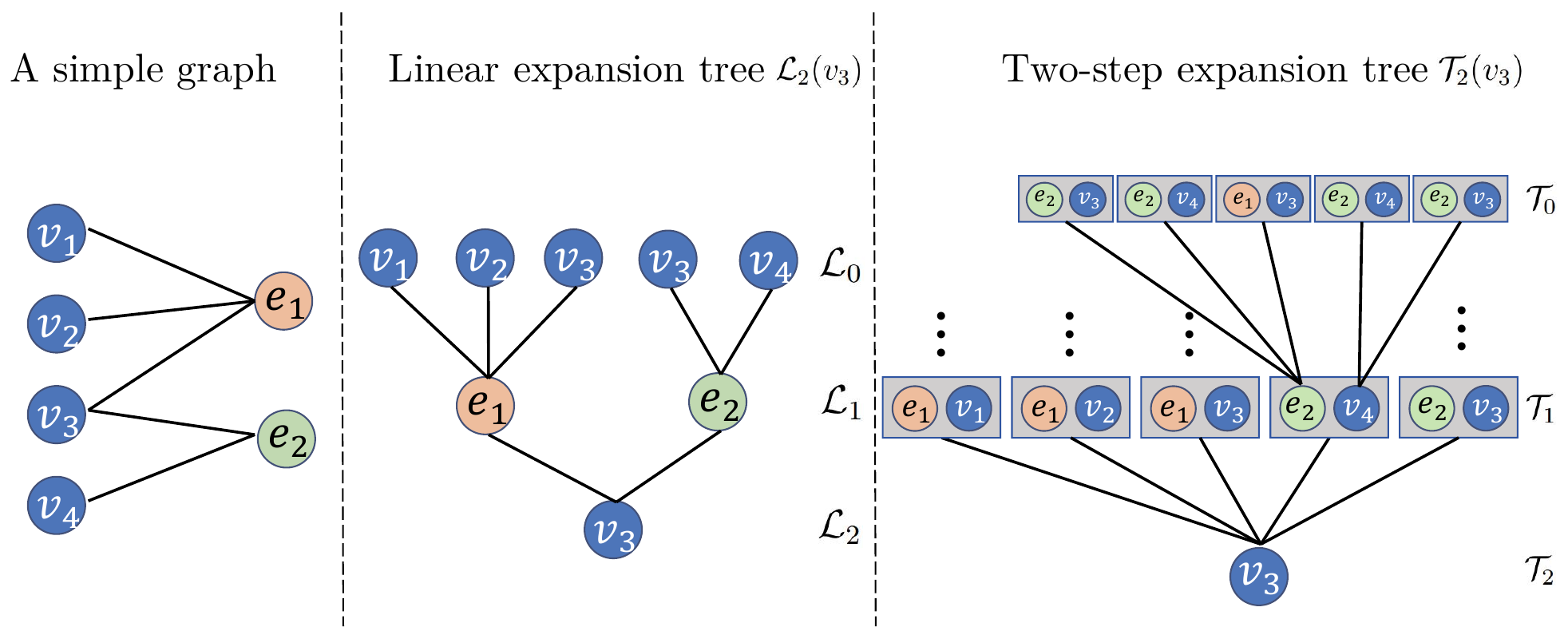}
\caption{Illustration and comparison of expansion trees rooted at node $v_3$ with depth $t=2$. The left panel shows the simple graph. The middle panel depicts the linear expansion tree $\mathcal{L}_2(v_3)$. The right panel depicts the two-steps expansion tree $\mathcal{T}_2(v_3)$.}
\label{expansion}
\end{figure*}

\begin{definition}
The \textit{linear capacity} of a linear expansion tree, denoted by $L_t(i)$, is the quantitative value associated with the expansion tree $\mathcal{L}_t(i)$. It is defined by the following recursive formula:
\begin{equation}
    L_t(i) = \sum_{j=1}^n a_{ij} L_{t-1}(j), \quad \text{for } t > 0,
\end{equation}
where $a_{ij} = 1$ if there exists an edge $i \sim j$ in $G$, and $a_{ij} = 0$ otherwise. The initial condition is $L_0(i) = 1$ for all $i$.
\end{definition}

As we can see that the relationship is linear: the value at the current node is the direct sum of its neighbors' values. It is a well-known result that as the length $t$ grows to infinity, the normalized vector of these values converges to the standard eigenvector centrality for non-bipartite graphs \cite{Cvetkovic_1997}. Thus, the standard eigenvector centrality can be viewed as the limit of this linear capacity.

We now extend the above idea to our method. Just as the linear expansion tree models the one-step neighbor relationship, we need to model the two-steps neighbor relationship in the bipartite representation of hypergraphs. Specifically, a two-steps walk $i \sim j \sim k$ implies that node $i$ is determined by combining node $j$ and $k$. This leads to the following recursive definition of two-steps expansion tree. A schematic illustration of this linear expansion tree is provided in the right panel of Fig. \ref{expansion}.
\begin{definition}
Let $B(H)$ be the incidence bipartite graph. For any node $i$, the two-steps expansion tree of depth $t$, denoted $\mathcal{T}_t(i)$, is defined recursively:
\begin{enumerate}
    \item Base case ($t=0$): $\mathcal{T}_0(i)$ consists of a single node $i$ (the root).
    \item Recursive step ($t > 0$): The root of $\mathcal{T}_t(i)$ is $i$. For every two-steps walk $i \sim j \sim k$ in $B(H)$ starting from node $i$, the root has two sub-trees as children: a left sub-tree $\mathcal{T}_{t-1}(j)$ rooted at node $j$ with depth $t-1$, and a right sub-tree $\mathcal{T}_{t-1}(k)$ rooted at node $k$ with depth $t-1$.
\end{enumerate}
\end{definition}

This tree structure illustrates that the root connects to the pair of nodes $j$ and $k$. Based on this structure, we define a measure to quantify its capacity.

\begin{definition}
The \textit{geometric capacity} of a two-steps expansion tree, denoted by $C_t(i)$, is the quantitative value associated with the expansion tree $\mathcal{T}_t(i)$. It is defined by the following recursive formula:
\begin{equation}\label{eq:geometric_recurrence}
    C_t(i) = \sqrt{\sum_{j=1}^n \sum_{k=1}^n a_{ijk} C_{t-1}(j) C_{t-1}(k)}, \quad \text{for } t > 0,
\end{equation}
where $a_{ijk} = 1$ if there exists a walk $i \sim j \sim k$ in $B(H)$, and $a_{ijk} = 0$ otherwise. The initial condition is $C_0(i) = 1$ for all $i$.
\end{definition}

The product term $C_{t-1}(j)C_{t-1}(k)$ indicates that the capacity of the root depends on the joint contribution of the paired branches. 

Having defined the geometric capacity for a two-steps expansion tree with depth $t$, a natural question arises: what is the limit of these capacities as the tree grows indefinitely? The following theorem establishes that the normalized vector of these capacities converges to our proposed HTEC centrality.

\begin{theorem}\label{thm4.3}
Let $\mathbf{C}_t = (C_t(1), \dots, C_t(n))^{\top}$ be the vector of geometric capacity for a two-steps expansion tree with depth $t$. Let $\mathbf{u}_t$ be the normalized capacity vector given by:
\begin{equation}
    \mathbf{u}_t = \frac{\mathbf{C}_t}{\|\mathbf{C}_t\|}.
\end{equation}
Assume the two-steps tensor $\mathcal{A}$ is nonnegative, weakly primitive. As the depth $t \to \infty$, the sequence $\{\mathbf{u}_t\}$ converges to the unique positive HTEC vector $\mathbf{x}$ satisfying the eigenvalue equation $\mathcal{A}\mathbf{x}^2 = \rho \mathbf{x}^{[2]}$, where $\rho$ is the spectral radius of $\mathcal{A}$.
\end{theorem}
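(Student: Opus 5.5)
The plan is to recognize the recursion \eqref{eq:geometric_recurrence} as an unnormalized power iteration for the nonnegative tensor $\mathcal{A}$ and then invoke the convergence theory of the Ng--Qi--Zhou power method for weakly primitive tensors \cite{Ng_2010,Liu_2010,Zhou_2013}. In vector form the recursion reads
\begin{equation*}
\mathbf{C}_t = \left(\mathcal{A}\mathbf{C}_{t-1}^2\right)^{[1/2]}, \qquad \mathbf{C}_0 = \mathbf{1}.
\end{equation*}
Define the map $T(\mathbf{y}) = (\mathcal{A}\mathbf{y}^2)^{[1/2]}$ on the positive cone. It is positively homogeneous of degree one: $T(c\mathbf{y}) = cT(\mathbf{y})$ for $c>0$. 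Hence $\mathbf{u}_t = T(\mathbf{u}_{t-1})/\|T(\mathbf{u}_{t-1})\|$, so normalizing at every step or only at the end yields the same sequence. This is exactly the Ng--Qi--Zhou iteration $\mathbf{y}^{(t)} = (\mathcal{A}\mathbf{x}^{(t-1)\,2})^{[1/(m-1)]}$, $\mathbf{x}^{(t)} = \mathbf{y}^{(t)}/\|\mathbf{y}^{(t)}\|$, with $m=3$ and positive starting vector $\mathbf{1}$.

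First I will check that the iterates stay strictly positive. Every row of $\mathcal{A}$ has a nonzero entry, since $a_{iki}=1$ for any neighbor $k$, as in the proof of Lemma \ref{lemma2.5}. Hence positivity of $\mathbf{C}_{t-1}$ implies positivity of $\mathbf{C}_t$, and all iterates lie in the interior of the cone where $T$ is continuous and order-preserving.

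Next I will use the Collatz--Wielandt bounds. Set
\begin{equation*}
\underline\lambda_t = \min_i \frac{(\mathcal{A}\mathbf{u}_t^2)_i}{(u_t)_i^2}, \qquad \overline\lambda_t = \max_i \frac{(\mathcal{A}\mathbf{u}_t^2)_i}{(u_t)_i^2}.
\end{equation*}
By monotonicity and homogeneity of $T$, $\underline\lambda_t$ is nondecreasing and $\overline\lambda_t$ is nonincreasing. Both are bracketed by $\rho$, by the Perron--Frobenius theorem (Lemma \ref{thm Perron Th.}), which applies because weak primitivity implies weak irreducibility.

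The main step is to show $\overline\lambda_t - \underline\lambda_t \to 0$ and that $\mathbf{u}_t$ converges to $\mathbf{x}/\|\mathbf{x}\|$. I will argue this with the Hilbert projective metric $d(\mathbf{y},\mathbf{z}) = \log\max_i(y_i/z_i) - \log\min_i(y_i/z_i)$. Since $T$ is monotone and homogeneous of degree one, it is nonexpansive in $d$, and the Perron vector $\mathbf{x}$ is a fixed point up to scaling. Weak primitivity means some power $M(\mathcal{A})^N$ is positive. From this I expect to show that $T^N$ maps the cone into its interior with a uniform bound on the ratio of coordinates. Birkhoff's contraction theorem, or the argument in \cite{Zhou_2013}, then gives that $T^N$ strictly contracts $d$ on bounded sets, so $d(\mathbf{u}_t,\mathbf{x})\to 0$.

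The obstacle is translating primitivity of the representative matrix into this positivity/contraction property of the nonlinear map. I would try to prove it directly: if $(M^s)_{ij}>0$, then $(T^s\mathbf{y})_i$ depends monotonically and nontrivially on $y_j$. This follows from unwinding the sums defining $T$ along a path in the digraph of $M$. Rather than redo this, I would first try simply to cite the convergence theorem for weakly primitive nonnegative tensors in \cite{Zhou_2013}. Finally, convergence in the projective metric together with the normalization $\|\mathbf{u}_t\|=1$ gives $\mathbf{u}_t\to\mathbf{x}$ in norm, where $\mathbf{x}$ is the unique normalized positive eigenvector.
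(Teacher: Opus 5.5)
Your proposal is correct and follows the same route as the paper. You identify the capacity recursion with the unnormalized power iteration $\mathbf{C}_t=(\mathcal{A}\mathbf{C}_{t-1}^2)^{[1/2]}$, $\mathbf{C}_0=\mathbf{1}$, and then invoke the known convergence of the Ng--Qi--Zhou power method for weakly primitive nonnegative tensors; the paper does exactly this, by induction and a citation. Your homogeneity remark, that normalizing at each step or only at the end gives the same $\mathbf{u}_t$, makes explicit a point the paper uses silently; the Hilbert-metric backup is not needed, and if you pursued it you would need strict monotonicity of $T^N$ plus a compactness argument in place of Birkhoff's theorem, which is a statement about linear maps.
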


\begin{IEEEproof}
The proof relies on establishing an equivalence between the recursive definition of geometric capacity and the power iteration method used for finding the tensor's Perron vector.

Consider the standard power iteration algorithm for computing the spectral radius of the tensor $\mathcal{A}$. Let $\{\mathbf{y}^{(t)}\}$ be the sequence of vectors generated by the iteration, starting with $\mathbf{y}^{(0)} = \mathbf{1}$:
\begin{equation}\label{eq:power_iter_seq}
    y_i^{(t)} = \left(\sum_{j=1}^n \sum_{k=1}^n a_{ijk} y_j^{(t-1)} y_k^{(t-1)}\right)^{\frac{1}{2}}.
\end{equation}
We observe that this iterative update rule is mathematically identical to the recurrence formula for geometric capacity given in Eq.~\eqref{eq:geometric_recurrence}. To formally verify this, we prove that $\mathbf{C}_t = \mathbf{y}^{(t)}$ for all $t \ge 0$ via mathematical induction.

First, for the base case ($t=0$), by definition $C_0(i) = 1$ for all $i$, and the power iteration is initialized with $y_i^{(0)} = 1$. Thus, $\mathbf{C}_0 = \mathbf{y}^{(0)}$. Next, for the inductive step, assume $\mathbf{C}_{t-1} = \mathbf{y}^{(t-1)}$. Substituting this into the recursion yields:
\begin{equation}
\begin{split}
C_t(i) &= \sqrt{\sum_{j=1}^n \sum_{k=1}^n a_{ijk} C_{t-1}(j) C_{t-1}(k)} \\
       &= \sqrt{\sum_{j=1}^n \sum_{k=1}^n a_{ijk} y_j^{(t-1)} y_k^{(t-1)}} \\
       &= y_i^{(t)}.
\end{split}
\end{equation}
This establishes that the vector of geometric capacities $\mathbf{C}_t$ is identical to the iterative vector $\mathbf{y}^{(t)}$.

With this equivalence established, we use the Perron-Frobenius theory for nonnegative tensors \cite{Qi_2017}. Since $\mathcal{A}$ is a nonnegative weakly primitive tensor, the normalized sequence of the power iteration is guaranteed to converge to the unique positive eigenvector $\mathbf{x}$ associated with the spectral radius $\rho$. Consequently, the normalized capacity vector $\mathbf{u}_t$ satisfies:
\begin{equation*}
    \lim_{t \to \infty} \mathbf{u}_t = \lim_{t \to \infty} \frac{\mathbf{C}_{t}}{\|\mathbf{C}_t\|} = \lim_{t \to \infty} \frac{\mathbf{y}^{(t)}}{\|\mathbf{y}^{(t)}\|} = \mathbf{x},
\end{equation*}
where $\mathbf{x}$ is the unique HTEC vector satisfying $\mathcal{A}\mathbf{x}^2 = \rho \mathbf{x}^{[2]}$.
\end{IEEEproof}

This theorem provides a concrete combinatorial interpretation: the HTEC component $x_i$ represents the limit geometric capacity of the two-steps expansion tree rooted at node $i$, thereby measuring the ability of node $i$ to aggregate influence from its paired neighbors.

\subsection{Numerical illustration and algorithm}
The above subsection provided the theoretical interpretation. We now examine the calculation of HTEC using the power iteration method. This approach reveals the reinforcement mechanism where nodes and hyperedges mutually update their importance. To make it clear, we interpret the structural meaning of each iteration by detailing the process.

Given a connected hypergraph $H$ and its two-steps tensor $\mathcal{A} = (a_{ijk})$, the centrality vector is computed via the following iteration:

\begin{equation*}
x_i^{(t+1)} =  \left( \sum_{j,k=1}^n a_{ijk} x_j^{(t)} x_k^{(t)} \right)^{\frac{1}{2}}.
\end{equation*}
Starting from an initial all-one vector $\mathbf{x}^{(0)}=\mathbf{1}$, this non-linear update reinforces the node scores based on the joint influence of two-steps walks.  Since $\mathcal{A}$ is weakly primitive, this iterative sequence is guaranteed to converge to the unique positive Perron vector.

We describe the details of the first two iterations. To start with an initial vector $\mathbf{x}^{(0)} = \mathbf{1}$, which means we assume every node and hyperedge has the same initial importance. When we substitute $\mathbf{x}^{(0)} = \mathbf{1}$ into the update formula, we have
\begin{equation*}
  x_i^{(1)} =  \sqrt{\sum_{j,k=1}^{n} a_{ijk} x_j^{(0)}x_k^{(0)}}
 = \sqrt{\sum_{j,k=1}^{n} a_{ijk} \cdot 1 \cdot 1} = \sqrt{\sum_{j,k=1}^{n} a_{ijk}}.
\end{equation*}
Depending on whether $i$ is a node or a hyperedge, this sum has a structural meaning. 

\textbf{Case 1: }For a node $v$, the value $x_v^{(1)}$ is determined by the sum of the sizes of all hyperedges that node $v$ belongs to (mathematically, $\sum_{v \in e} |e|$). In other words, in the first step, a node gets a higher score if it joins larger groups. It reflects the number of neighbors a node can reach. 

\textbf{Case 2: }For a hyperedge $e$: the value $x_e^{(1)}$ is determined by the sum of the degrees of all nodes contained in hyperedge $e$ (mathematically, $\sum_{v \in e} d(v)$). That is, in the first step, a hyperedge gets a higher score if it contains nodes that belong to many other groups.

At the second iteration, we have
\begin{equation*}
  x_i^{(2)} =  \sqrt{\sum_{j,k=1}^{n} a_{ijk} x_j^{(1)}x_k^{(1)}}.
\end{equation*}
We analyze the specific meaning for nodes and hyperedges. Specifically, for a node $v \in V$:
$$x_v^{(2)} = \sqrt{\sum_{v,u \in e} x_e^{(1)} x_u^{(1)}}.$$
The centrality of node $v$ is updated by combining the importance of the hyperedge $e$ and the node $u$. If either the hyperedge $e$ is unimportant or the neighbor $u$ is weak, the product drops. Therefore, it indicates that a node achieves a high score only if it connects to other highly ranked nodes and hyperedges. Symmetrically, the centrality of hyperedge $e$ is updated using the scores of the node $v$ and the intersecting hyperedge $f$:
$$x_e^{(2)} = \sqrt{\sum_{v \in e,v \in f} x_v^{(1)} x_f^{(1)}}.$$
This formula implies that a hyperedge achieves a high score only if it contains highly ranked nodes $x_v^{(1)}$ that are also part of other highly ranked hyperedges $x_f^{(1)}$. Consequently, the method ranks a hyperedge based on the significance of its intersections, rather than simply its size or the number of nodes it contains.

Based on the above discussion, we propose Algorithm 1 for calculating the hypergraph two-steps eigenvector centrality. Algorithm \ref{alg:hyper_centrality} involves computing the spectral radius and the corresponding eigenvector of a tensor.
Those interested in this topic may refer to Refs. \cite{Ng_2010,Liu_2010,Zhou_2013}.

\begin{algorithm}[htbp]
    \caption{The Hypergraphs two-steps Eigenvector Centrality(HTEC)}
    \label{alg:hyper_centrality}
    \begin{algorithmic}[1] 
    \setlength{\itemsep}{3pt}
        \State \textbf{Inputs:} A connected hypergraph $H = (V, \mathcal{E})$ with $n_v = |V|$ nodes and $n_e = |\mathcal{E}|$ hyperedges.
        \State \textbf{Outputs:} 
        \State \qquad Node centrality vector $\mathbf{x}_V \in \mathbb{R}^{n_v}$;
        \State \qquad Hyperedge centrality vector $\mathbf{x}_E \in \mathbb{R}^{n_e}$.
        
        \State \textbf{Step 1:} Construct the incidence bipartite graph $B(H)$ where the node set is $V \cup \mathcal{E}$ (first $n_v$ indices for nodes, subsequent $n_e$ indices for hyperedges). Total dimension $n = n_v + n_e$.
        
        \State \textbf{Step 2 :} Construct the two-steps tensor $\mathcal{A}$ from $B(H)$.
        
        \State \textbf{Step 3:} Choose an initial positive vector $\mathbf{x}^{(0)} \in \mathbb{R}^n$ (e.g., $\mathbf{x}^{(0)} = \mathbf{1}$). Calculate $\mathbf{y}^{(0)}=\mathcal{A} (\mathbf{x}^{(0)})^{2}$. Set iteration counter $k=1$.
        
        \State \textbf{Step 4:} Compute the updated vector and bounds:
        \begin{equation*}
        \begin{split}
            \mathbf{x}^{(k)} &= \frac{(\mathbf{y}^{(k-1)})^{[\frac{1}{2}]}}{\| (\mathbf{y}^{(k-1)})^{[\frac{1}{2}]} \|}, \quad \mathbf{y}^{(k)}=\mathcal{A} (\mathbf{x}^{(k)})^{2}, \\
            \underline{\lambda}_k &= \min_{i} \frac{y^{(k)}_i}{(x^{(k)}_i)^{2}}, \quad \overline{\lambda}_k = \max_{i} \frac{y^{(k)}_i}{(x^{(k)}_i)^{2}}.
        \end{split}
        \end{equation*}

        \State \textbf{Step 5:} 
        \If {$\overline{\lambda}_k = \underline{\lambda}_k $}
            \State Go to \textbf{Step 6}.
        \Else
            \State Set $k \leftarrow k+1$ and return to \textbf{Step 4}.
        \EndIf

        \State \textbf{Step 6 :} Partition the converged vector $\mathbf{x}^{(k)}$ to obtain the final centralities:
        \[
        \mathbf{x}_V = \left( x^{(k)}_1, \dots, x^{(k)}_{n_v} \right)^\top, \quad \mathbf{x}_E = \left( x^{(k)}_{n_v+1}, \dots, x^{(k)}_{n} \right)^\top.
        \]
    \end{algorithmic}
\end{algorithm}

To demonstrate the mechanism of the proposed hypergraph two-steps centrality, we construct a non-uniform sunflower hypergraph, as visualized in Fig. \ref{sunflower}.  The sunflower hypergraph is constituted by a single central hub, node $v_1$, which is shared by six distinct hyperedges $\{e_1, \dots, e_6\}$. These hyperedges are designed with increasing cardinalities, ranging from a size of $|e_1|=2$ to $|e_6|=7$. Except for the central hub $v_1$, all other nodes are leaf nodes that appear in only one hyperedge.
\begin{figure}[!t]
\centering
\includegraphics[width=58mm]{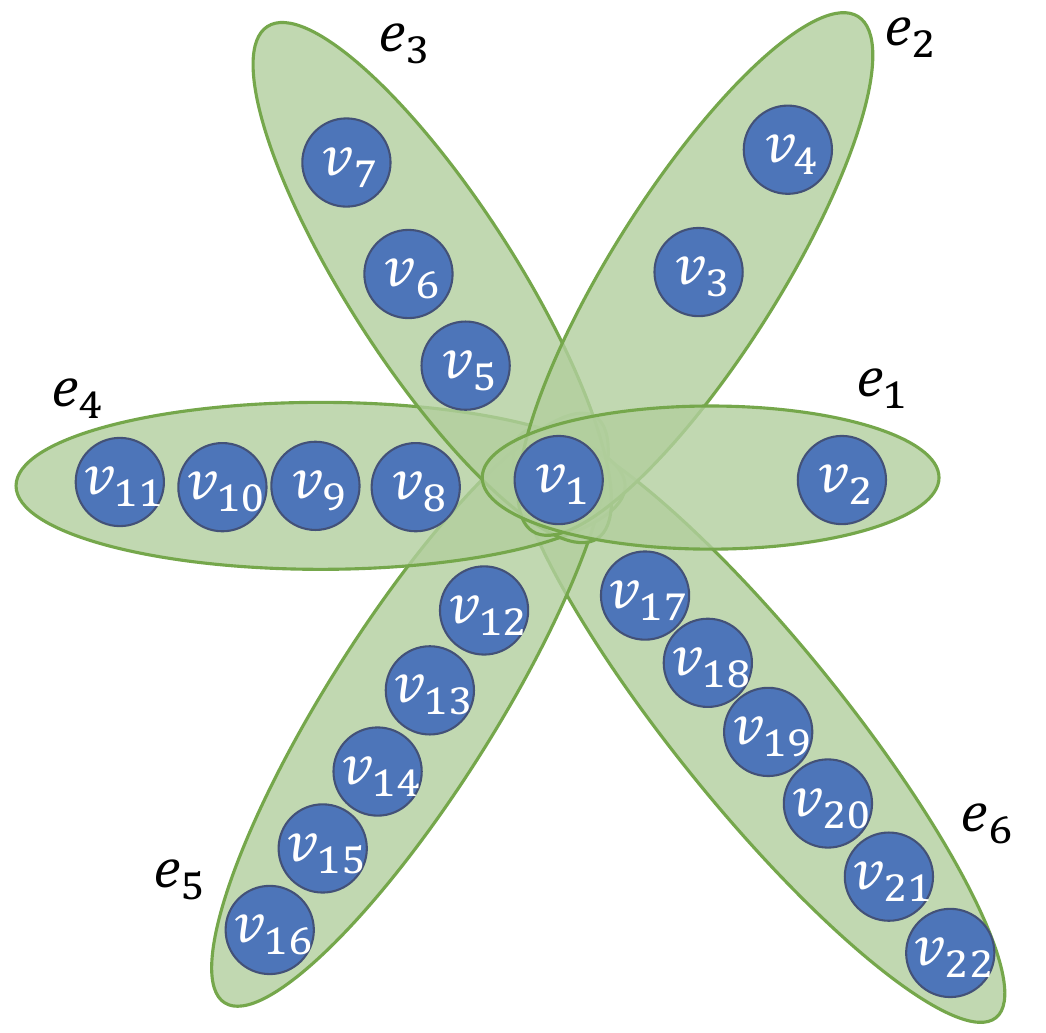}
\caption{A non-uniform sunflower hypergraph with hyperedge size from 2 to 7.}
\label{sunflower}
\end{figure}

We compute the hypergraph two-steps centrality vector $\mathbf{x}$ and summarize the numerical results in Table \ref{tab:hyperstar_results}. The analysis reveals a cardinality reinforcement mechanism. First, as expected, the central hub $v_1$ achieves the highest centrality score ($0.3489$) due to its global connectivity across all groups. While all leaf nodes have a degree of 1, their scores differ significantly based on the size of the hyperedge they belong to. Specifically,  node $v_2$ in the smallest hyperedge ($|e_1|=2$) receives the lowest score ($0.0941$), whereas nodes $v_{17}- v_{22}$ in the largest edge ($|e_6|=7$) achieve the highest leaf scores ($0.1953$). Similarly, the hyperedges themselves are ranked by size, with the largest hyperedge $e_6$ identified as the most significant ($0.2749$) and the smallest hyperedge $e_1$ ranked lowest ($0.2192$). The mechanism is straightforward: larger groups provide more connections. In our tensor model, every member in a hyperedge contributes to a node's score via a two-steps walk ($v_i \to e \to v_j$). For instance, in the largest edge ($|e|=7$), a node receives contributions from 6 other members; in contrast, in the smallest edge ($|e|=2$), it receives contribution from only 1 other member. Since the algorithm sums up these contributions, nodes in larger collaborations naturally accumulate higher scores.

\begin{table}[htbp]
\centering
\caption{Centrality scores in the non-uniform sunflower hypergraph.}
\label{tab:hyperstar_results}
\begin{tabular}{cccc}
\toprule
Edge ID &  Edge Score ($x_e$) & Node ID & Node Score ($x_{v}$) \\
\midrule 
-&- &$v_1$ &\textbf{0.3489} \\
$e_1$  & 0.2192 &$v_2$ & 0.0941 \\
$e_2$  & 0.2249 &$v_3,v_4$ & 0.1076 \\
$e_3$  & 0.2324 &$v_5-v_7$ & 0.1235 \\
$e_4$  & 0.2425 &$v_8-v_{11}$ & 0.1426 \\
$e_5$  & 0.2560 &$v_{12}-v_{16}$ & 0.1659 \\
$e_6$  & \textbf{0.2749} &$v_{17}-v_{22}$ & 0.1953 \\
\bottomrule
\end{tabular}
\end{table}

\section{Experiments on real-world hypergraphs}

In this section, we analyze the proposed hypergraph two-steps eigenvector centrality model (HTEC) on two real-world hypergraphs: \textit{Math-StackExchange co-tags} dataset \cite{Benson_2013}  and \textit{Walmart-Trips} dataset \cite{Amburg_2020}, which are available at https://github.com/ftudisco/node-edge-hypergraph-centrality . The Math-StackExchange co-tags dataset captures the higher-order co-tagging structure of mathematical questions on math.stackexchange.com. Here, nodes represent tags, and hyperedges  are formed by sets of tags applied to the same question. The Walmart-Trips dataset is derived from transactional data describing customer purchasing behavior. In this hypergraph, nodes represent specific products, while each hyperedge corresponds to a set of items co-purchased in a single trip. The basic statistics of these two datasets are summarized in Table \ref{tab:datasets}.

\begin{table*}[!t]
  \centering
  \caption{Statistics of Math-StackExchange co-tags and Walmart-Trips datasets.}
  \label{tab:datasets}
  \begin{tabular}{lcccc}
    \toprule
    Dataset & \makecell{Number of \\ nodes} & \makecell{Number of \\ hyperedges} & \makecell{Avg. hyperedge \\ cardinality} & \makecell{Max. hyperedge \\ cardinality} \\
    \midrule
    Math-StackExchange & 1,629 & 170,476 & 3.48 & 5 \\
    Walmart-Trips    & 88,860 & 65,979  & 6.86 & 25 \\
    \bottomrule
  \end{tabular}
\end{table*}
 
To analyze the behavior of the proposed hypergraph two-steps eigenvector centrality (HTEC), we compare its ranking results with three hypergraph centrality models: Linear, Max, and Log-Exp model centrality \cite{Tudisco_2021}. These centrality measures will be introduced later. Fig. 4 shows the scatter plots of node and hyperedge centrality on the Walmart and Math-StackExchange datasets, respectively. The x-axis represents the proposed HTEC, while the y-axis represents the other centrality measures. Each dot represents an individual node or hyperedge.

\begin{figure*}[!t]
\centering
    \subfloat[Math-StackExchange co-tags dataset node centrality\label{fig2:sub1}]{%
       \includegraphics[width=0.85\textwidth]{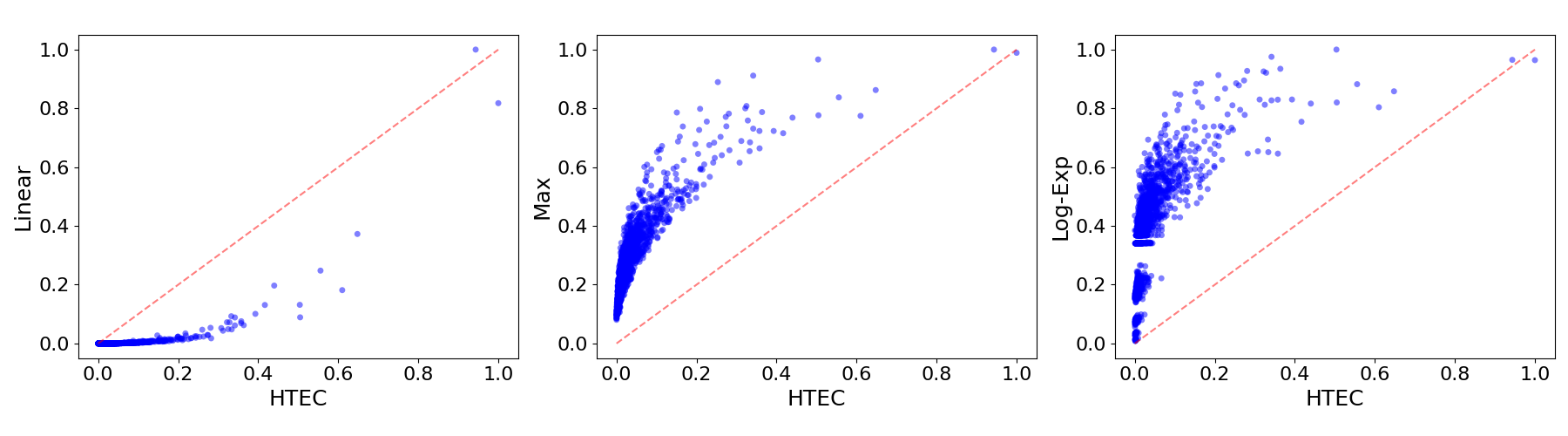}
    }
    \hfil 
    \subfloat[Math-StackExchange co-tags dataset hyperedges centrality\label{fig2:sub2}]{%
       \includegraphics[width=0.85\textwidth]{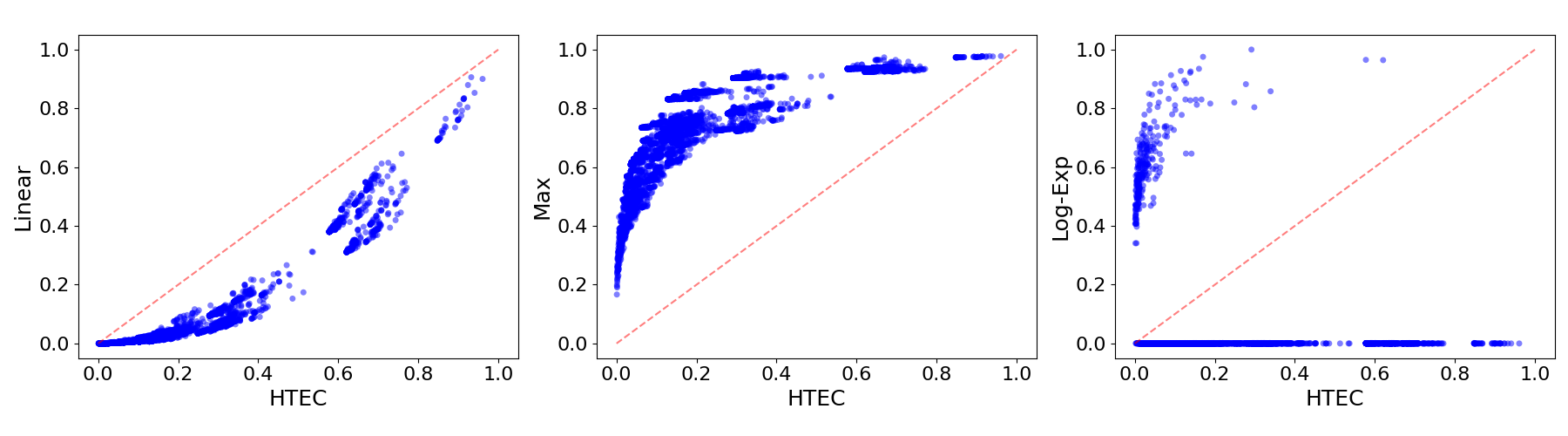}
    }
    \\ 
    
    \subfloat[Walmart trips dataset node centrality\label{fig2:sub3}]{%
       \includegraphics[width=0.85\textwidth]{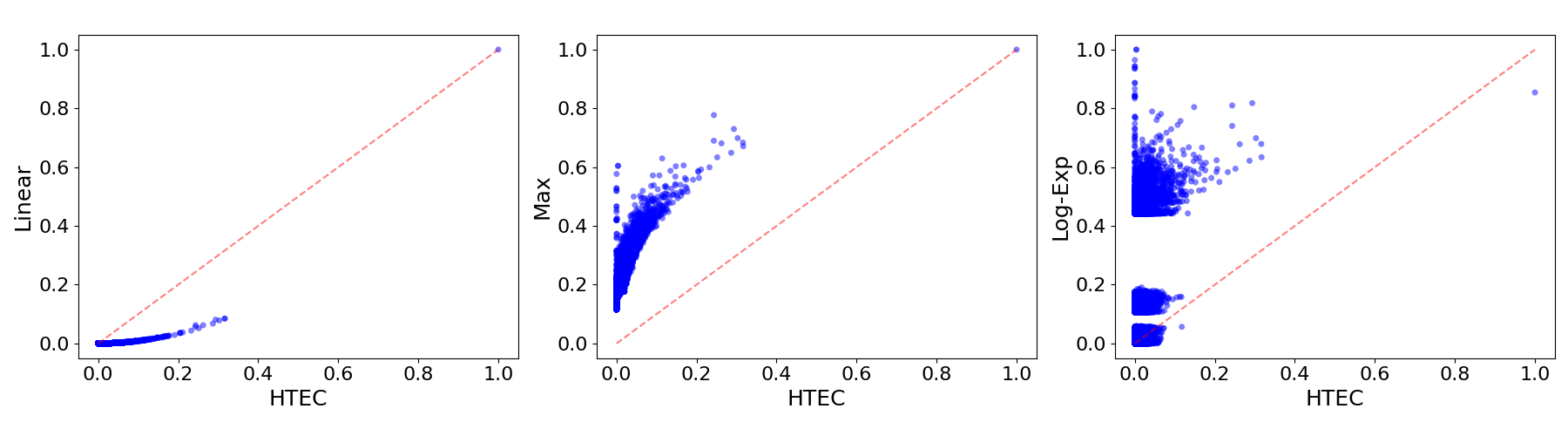}
    }
    \hfil
    \subfloat[Walmart trips dataset hyperedge centrality\label{fig2:sub4}]{%
       \includegraphics[width=0.85\textwidth]{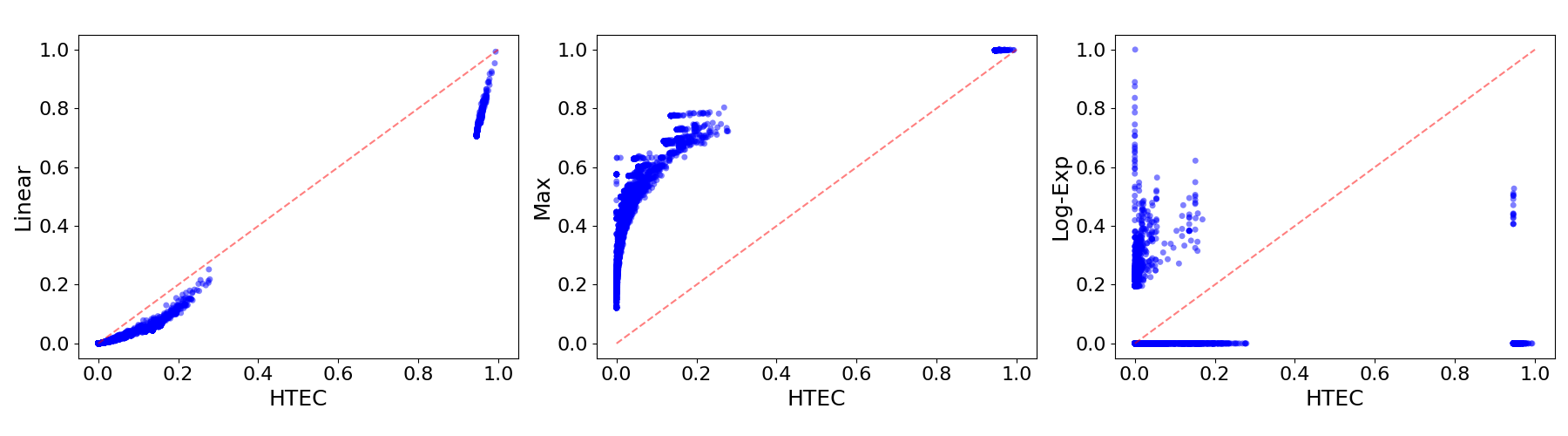}
    }
    
    \caption{Scatter plots of node and edge centralities by HTEC, Linear, Max and Log-exp model centralities.}
    \label{fig:all2}
\end{figure*}

The Linear centrality model corresponds to the standard eigenvector centrality applied to the graph and the line graph obtained by clique expanding the input hypergraph \cite{Tudisco_2021}. Comparing this with our HTEC method shows different patterns for hyperedges and nodes, see the left column of Fig. 4. For hyperedges, the data points align closely with the diagonal, indicating a high consistency between the two methods in ranking hyperedges. In contrast, the node centrality plot displays a more dispersed distribution. Note that in the Walmart trips hypergraph, there is a visible cluster of nodes that receive low scores in the Linear model but are assigned higher values in the HTEC model. 

In Max centrality model, a node's importance is not determined by the sum of its connections, but primarily by the single most important hyperedge it belongs to \cite{Tudisco_2021}. The comparison with the Max centrality model reveals a strong consistency in how the two methods rank importance. As shown in the middle column of Fig. 4, the data points indicate that both models identify the nodes in a similar order. Since the Max model assigns a node's importance based on its single most important incident hyperedge, this similarity indicates that the proposed HTEC method is also significantly influenced by the most prominent hyperedges a node belongs to. This behavior is consistent with the mutual reinforcement mechanism, where a node's score is elevated by its association with influential groups.

The Log-Exp centrality model generalizes Z-eigenvector centrality of uniform hypergraphs to non-uniform hypergraphs using logarithmic and exponential mappings \cite{Tudisco_2021}. The relationship between the HTEC and the Log-Exp model shows the weakest correlation among all methods. As observed in the right column of Fig. 4, many nodes have low Log-Exp scores but varying HTEC scores. This happens because the Log-Exp model multiplies the scores of neighbors. In this process, a single low-ranked node reduces the entire hyperedge score, even if the other nodes are important. In contrast, the HTEC method calculates scores by adding up the two-steps walks. This means that the final score reflects the total contribution of all neighbors. Therefore, even if a hyperedge contains a low-ranked node, the presence of other high-ranked nodes allows the hyperedge to maintain a high score.

To provide a concrete example of the ranking results, Table \ref{tab:math_results} lists the top-10 tags identified by each method. The results for HTEC, Linear, and Max are highly consistent. All three methods place fundamental topics like ``Calculus'' and ``Real analysis'' in the top positions. This confirms that HTEC, similar to the other centrality measures, effectively identifies the most important nodes. In contrast, the Log-Exp model produces a different list. For instance, it ranks ``Probability'' much higher (2nd place) and includes unique tags such as ``Algebra precalculus'' that do not appear in the other lists. 

\begin{table}[!t]
\centering
\caption{Top 10 nodes in the math stackexchange co-tags hypergraph.}
\label{tab:math_results}
\resizebox{\columnwidth}{!}{%
\begin{tabular}{llll}
\toprule
Linear \cite{Tudisco_2021} & Max \cite{Tudisco_2021} & Log-exp \cite{Tudisco_2021} & HTEC \\
\midrule 
Calculus & Calculus & Linear algebra & Real-analysis \\
Real analysis & Real analysis & Probability & Calculus \\
Integration & Linear algebra & Calculus & Integration \\
Sequences and series & Probability & Real analysis & Analysis \\
Limits & Abstract algebra & Complex analysis & Sequences-and-series \\
Analysis & Integration & Algebra precalculus & Functional-analysis \\
Derivatives & Sequences and series & General topology & Linear-algebra \\
Linear algebra & Matrices & Differential equations & Limits \\
Multivariable calculus & General topology & Combinatorics & Derivatives \\
Definite integrals & Combinatorics & Geometry & Multivariable-calculus \\
\bottomrule
\end{tabular}%
}
\end{table}

Next, we analyze the consistency of the top-$k$ rankings. Since real-world applications often prioritize the top-ranked elements over the entire list, we calculate Kendall and Spearman correlation coefficients between the HTEC method and the three other methods: Linear, Max, and Log-Exp centrality model \cite{Tudisco_2021}. The results for the Walmart trips and Math-StackExchange cotags datasets are shown in Fig. 5.

\begin{figure*}[!t]
\centering
    \subfloat[Math-StackExchange co-tags dataset node centrality correlations\label{fig3:sub1}]{%
       \includegraphics[width=0.85\textwidth]{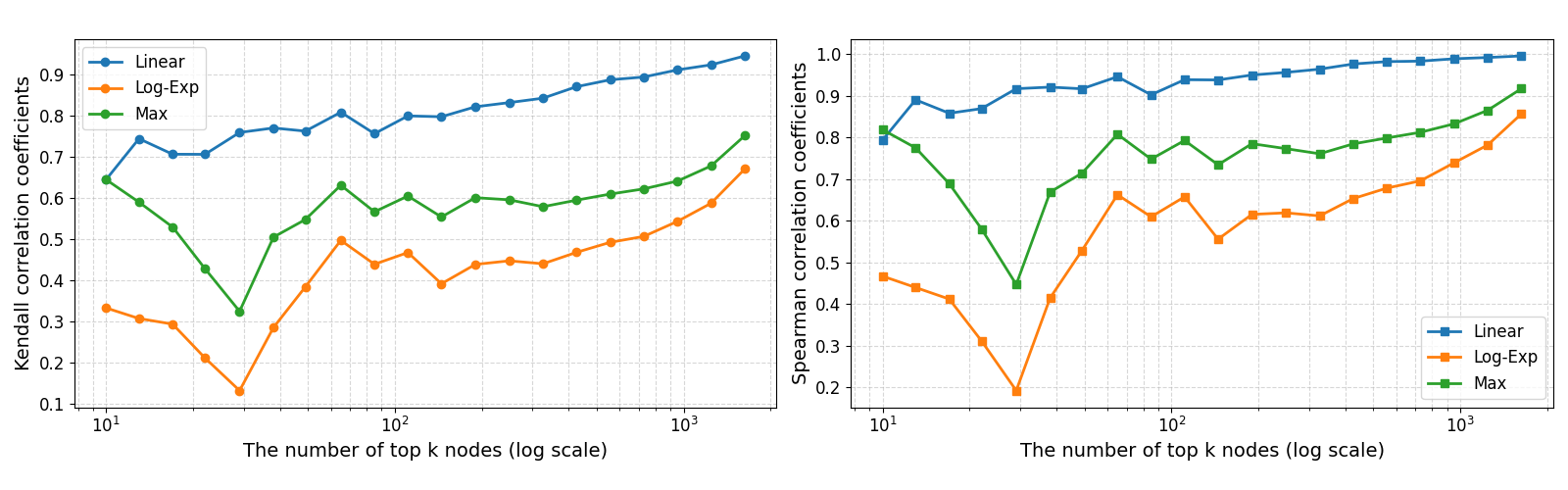}
    }
    \hfil 
    \subfloat[Math-StackExchange co-tags dataset hyperedges centrality correlations\label{fig3:sub2}]{%
       \includegraphics[width=0.85\textwidth]{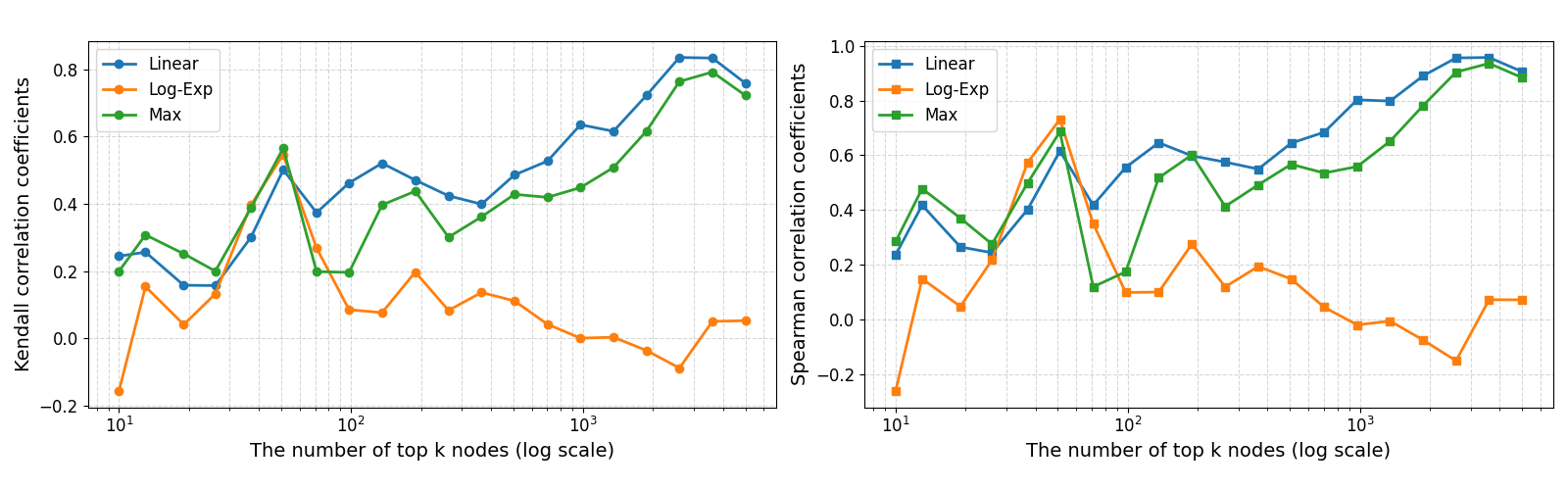}
    }
    \\ 

    \subfloat[Walmart trips dataset node centrality correlations\label{fig3:sub3}]{%
       \includegraphics[width=0.85\textwidth]{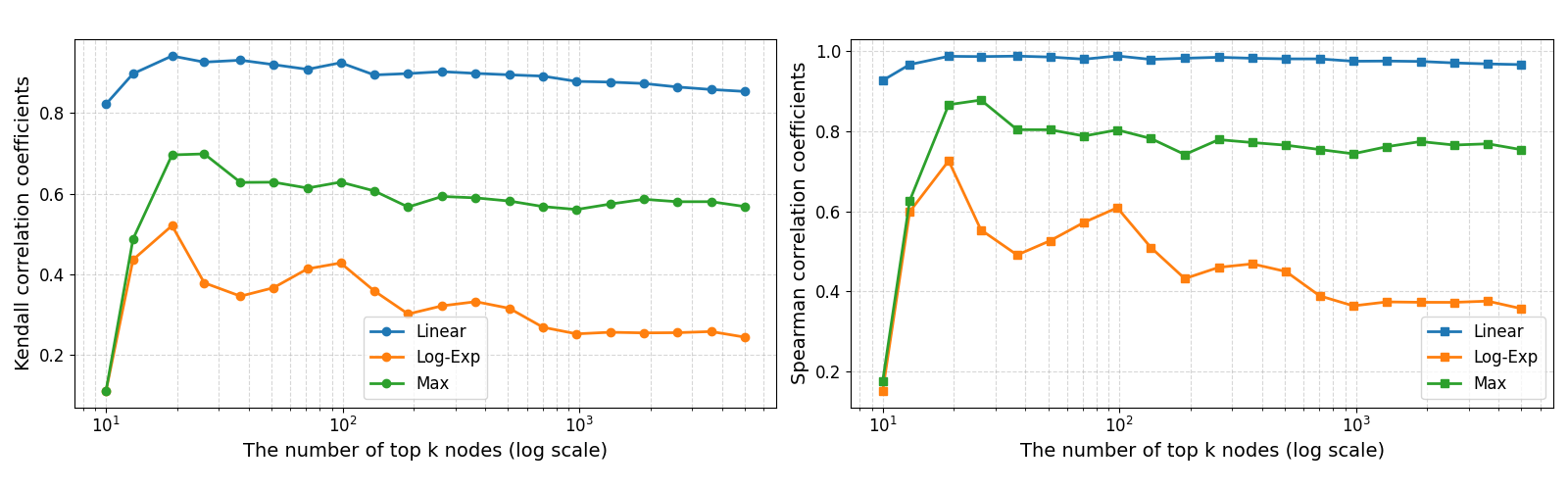}
    }
    \hfil 
    \subfloat[Walmart trips dataset hyperedges centrality correlations\label{fig3:sub4}]{%
       \includegraphics[width=0.85\textwidth]{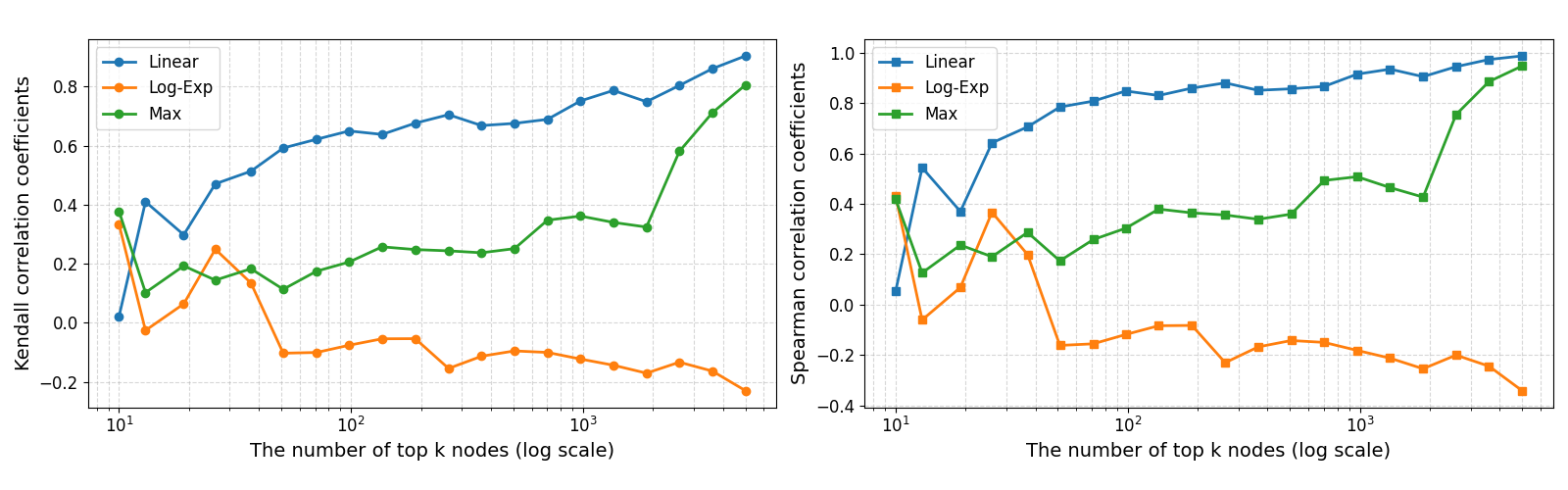}
    }

    \caption{Similarity of top-k ranked nodes and hyperedges by HTEC, Linear, Max and Log-exp model centralities.}
    \label{fig:all3}
\end{figure*}

For node centrality, the correlation trends in Fig. 5(a) and (c) show that the Linear model has the highest correlation with HTEC across all $k$ values (blue line). This strong alignment suggests that HTEC shares a key characteristic with the Linear model: both methods assign high rankings to nodes that have many connections. In contrast, the Log-Exp model (orange line) displays the weakest correlation, positioning significantly below the others. This divergence stems from its specific ``product-then-sum'' mechanism: unlike HTEC, the Log-Exp model first multiplies the scores of nodes within each hyperedge, meaning a single low-scoring node acts as a strict penalty that drastically reduces the hyperedge's contribution. Finally, the Max model (green line) falls in the middle. The reason is that Max centrality model only looks at the single best connection and ignores the rest.

In terms of hyperedge centrality, the correlation curves display a more complex pattern where the lines often intersect. At the beginning (when $k$ is small), the correlation values are low. This indicates that for the top-ranked nodes, each method identifies a different set of hyperedges. As we include more hyperedges (as $k$ gets larger), the correlation with the Linear model increases and eventually becomes higher than the others. This rising trend indicates that while the specific hyperedges at the very top of the lists differ between HTEC and Linear, the overall sets of selected hyperedges become increasingly consistent as the list grows. In other words, the disagreement is concentrated in the top rankings, whereas for larger lists, the two methods tend to include similar groups of hyperedges.

\section{Conclusion}
In this paper, we present a novel centrality measure for general hypergraphs, the Hypergraph Two-steps Eigenvector Centrality (HTEC). By transforming the hypergraph into its incidence bipartite graph and constructing a third-order tensor to encode two-steps walks, we derive a centrality vector that is guaranteed to exist and be unique via the Perron-Frobenius theorem. Furthermore, we provided a combinatorial interpretation for this measure: we proved that the centrality vector represents the limit geometric capacity of two-steps expansion trees, which measures the structural importance of nodes. The proposed HTEC measure captures a higher-order mutual reinforcement mechanism via a non-linear update rule. Unlike standard eigenvector centrality, which linearly sums neighbor scores, HTEC requires simultaneous strength from the connecting hyperedges and the neighbor nodes. Experimental results on two real-world hypergraphs confirmed that HTEC is able to identify important nodes and hyperedges, providing an effective tool for structural analysis in general hypergraphs. Future work will focus on extending the proposed tensor framework to multilayer networks, to quantify node centrality by combining different layers of connections.

\section*{Acknowledgement}

This work is supported by the National Natural Science Foundation
of China (No. 12071097, 12371344), the Natural Science Foundation for The Excellent Youth
Scholars of the Heilongjiang Province (No. YQ2024A009), and the Fundamental Research Funds for the Central
Universities (No. 2412025QD038).


\newpage

%
%
%
%

\vfill

\end{document}